\newtheorem{theorem}{Theorem}[section]
\newtheorem{lemma}[theorem]{Lemma}
\newtheorem{definition}{Definition}[section]
\newcommand{\beq}{\begin{equation}}
\newcommand{\eeq}{\end{equation}}
\newcommand{\beqq}{\[}
\newcommand{\eeqq}{\]}
\newcommand{\barr}{\begin{array}}
\newcommand{\earr}{\end{array}}
\newcommand{\berr}{\begin{eqnarray}}
\newcommand{\eerr}{\end{eqnarray}}
\newcommand{\berrr}{\begin{eqnarray*}}
\newcommand{\eerrr}{\end{eqnarray*}}
\begin{document}
%
\title{A Heterogeneous High Dimensional Approximate Nearest Neighbor Algorithm}
%
%
%

\author{Moshe~Dubiner
\thanks{M. Dubiner is with Google, e-mail: moshe@google.com}
\thanks{Manuscript submitted to IEEE Transactions on Information Theory on March 3, 2007.}
}

%
%

\markboth{Journal of \LaTeX\ Class Files,~Vol.~1, No.~1, March~2007}%
{Shell \MakeLowercase{\textit{et al.}}: Bare Demo of IEEEtran.cls for Journals}
%



\maketitle

\section*{Acknowledgment}

I would like to thank Phil Long and David Pablo Cohn for
reviewing rough drafts of this paper and suggesting many clarifications.
The remaining obscurity is my fault.

%
\IEEEpeerreviewmaketitle

\begin{abstract}

We consider the problem of finding high dimensional approximate nearest
neighbors. Suppose there are $d$ independent rare features, each having its own
independent statistics.
A point $x$ will have $x_{i}=0$ denote the absence of feature $i$,
and $x_{i}=1$ its existence. Sparsity means that usually $x_{i}=0$.
Distance between points is a variant of the Hamming distance.
Dimensional reduction converts the sparse heterogeneous problem into a lower
dimensional full homogeneous problem. However we will see that the converted
problem can be much harder to solve than the original problem.
Instead we suggest a direct approach. It consists of $T$ tries.
In try $t$ we rearrange the coordinates in decreasing order of
\beq (1-r_{t,i})\frac{p_{i,11}}{p_{i,01}+p_{i,10}} \ln\frac{1}{p_{i,1*}} \eeq
where $0<r_{t,i}<1$ are uniform pseudo-random numbers, and the $p's$ are
the coordinate's statistical parameters. The points are lexicographically
ordered, and each is compared to its neighbors in that order.

We analyze a generalization of this algorithm, show that it is optimal
in some class of algorithms, and estimate the necessary number of tries to
success. It is governed by an information like function, which we call
bucketing forest information. Any doubts whether it is ``information''
are dispelled by another paper, where unrestricted bucketing
information is defined.

\end{abstract}


\section{Introduction}  \label{introduction}

Suppose we have two bags of points, $X_{0}$ and $X_{1}$,
randomly distributed in a high-dimensional space. The points are independent of
each other, with one exception: there is one unknown point $x_{0}$
in bag $X_{0}$ that is significantly closer to an unknown point $x_{1}$
in bag $X_{1}$ than would be accounted for by chance. We want an efficient
algorithm for quickly finding these two 'paired' points.

The reader might wonder why we need two sets, instead of working as
usual with $X=X_{0}\cup X_{1}$. We have come a full circle on this issue.
The practical problem that got us interested in this theory involved
texts from two languages, hence two different sets. However it seemed that
the asymmetry between $X_{0}$ and $X_{1}$ was not important, so we developed
a one set theory. Than we found out that keeping $X_{0},X_{1}$ separate
makes thing clearer.

Let us start with the well known simple homogeneous marginally Bernoulli(1/2)
example. Suppose $X_{0},X_{1}\subset \{0,1\}^{d}$ of sizes $n_{0},n_{1}$
respectively are randomly chosen as independent Bernoulli(1/2) variables,
with one exception. Choose randomly one point $x_{0}\in X_{0}$,
xor it with a random Bernoulli($p$) vector and overwrite one randomly chosen
$x_{1}\in X_{1}$. A symmetric description is to say that
$x_{0},x_{1}$ $i$'th bits have the joint probability
\beq P=\left( \barr{cc}
p/2 & (1-p)/2 \\
(1-p)/2 & p/2
\earr \right) \eeq
For some $p>1/2$.
We assume that we know $p$. In practice it will have to be estimated.

Let
\beq \ln M=\ln n_{0}+\ln n_{1}-I(P)d \label{info}\eeq
where
\beq I(P)=p\ln(2p)+(1-p)\ln(2(1-p)) \eeq
is the mutual information between the special pair's single coordinate values.
Information theory tells us that we can not hope to pin the special pair down
into less than $W$ possibilities, but can come close to it in some asymptotic
sense. Assume that $W$ is small. How can we find the closest pair?
The trivial way to do it is to compare all the $n_{0}n_{1}$ pairs.
A better way has been known for a long time. The earliest references I am
aware of are Karp,Waarts and Zweig \cite{KWZ95}, Broder \cite{Bro98},
Indyk and Motwani \cite{IM98}. They do not
limit themselves to this simplistic problem, but their approach clearly
handles it. Without restricting generality let $n_{0}\le n_{1}$.
We randomly choose
\beq k \approx \log_{2}n_{0} \eeq
out of the $d$ coordinates, and compare the point pairs which agree on
these coordinates (in other words, fall into the same bucket).
The expected number of comparisons is
\beq n_{0}n_{1}2^{-k}\approx n_{1} \eeq
while the probability of success of one comparison is $p^{k}$.
In case of failure we try again, with other random k coordinates. At first
glance it might seem that the expected number of tries until success
is $p^{-k}$, but that is not true because the attempts are interdependent.
The correct computation is done in the next section. In the unlimited data
case $d\rightarrow\infty$ indeed
\beq T\approx p^{-k}\approx n_{0}^{\log_{2}1/p} \eeq
Is this optimal? Alon \cite{NA} has suggested the possibility of improvement
by using Hamming's perfect code. We have found that in the $n_{0}=n_{1}=n$
case, $T\approx n^{\log_{2}1/p}$ can be reduced to
\beq T\approx n^{1/p-1+\epsilon} \eeq
for any $\epsilon>0$, see \cite{MD2}.
Unfortunately this seems hard to convert into a practical algorithm.

In practice, most approximate nearest neighbor problems are heterogeneous.
Coordinates are not independent either, but there is a lot to learn from
the independent case. For starters let the joint probability matrix be
position dependent:
\beq P_{i}=\left( \barr{cc}
p_{i}/2 & (1-p_{i})/2 \\
(1-p_{i})/2 & p_{i}/2
\earr \right) \qquad 1\le i\le d\eeq
This is an important example which we will refer to as the marginally
Bernoulli(1/2) example. It turns out that in each try
coordinate $i$ should be chosen with probability
\beq \max\left[\frac{p_{i}-p_{cut}}{1-p_{cut}},0\right] \eeq
for some $\mathbf{cutoff}$ $\mathbf{probability}$ $0\le p_{cut}\le 1$.
An intuitive argument leading to that equation appears in section
\ref{intuitive}.

Section \ref{general} presents an independent data model, and a 
general nearest neighbor algorithm using its parameters.
Section \ref{lower} proves a lower bound for its success probability.
Section \ref{semiupper} proves an upper bound for a much lager class
of algorithms. The lower and upper bound are asymptotically similar.
The number of tries T satisfies
\beq \ln T\sim \max_{\lambda\ge 0}\left[\lambda\ln n_{0}-
\sum_{i=1}^{d}F(P_{i},\lambda)\right] \eeq
where $F(P_{i},\lambda)$ is defined in (\ref{up}).
The similarity to the information theoretic (\ref{info}) suggests that
$F(P_{i},\lambda)$ is some sort of information function. We call it the
$\mathbf{bucketing}$ $\mathbf{forest}$ $\mathbf{information}$ function.
\cite{MD2} proves a similar estimate for the performance of the ``best
possible'' bucketing algorithm, involving a $\mathbf{bucketing}$
$\mathbf{information}$ function with a very information theoretic look.

Section \ref{sparsity} shows that our algorithm preserves sparseness.
Section \ref{downside} shows that dimensional reduction is bad for sparse
data.

\section{The Homogeneous Marginally Bernoulli(1/2) Example}

The well known homogeneous marginally Bernoulli(1/2) example has been presented
in the introduction. It will be analyzed in detail because the main purpose
of this paper is generalizing it. The analysis is non-generalizable,
but the issues remain. Recall that we have a joint probability matrix
\beq P=\left( \barr{cc}
p/2 & (1-p)/2 \\
(1-p)/2 & p/2
\earr \right) \eeq
For some $p>1/2$. Without restricting generality let $n_{0}\le n_{1}$.
We randomly choose
\beq k\approx\min[\log_{2}n_{0}, (2p-1)d] \eeq
out of the $d$ coordinates. The reason for $k\le (2p-1)d$ (which was omitted
in the introduction for simplicity) will emerge later.
We compare point pairs which agree on the chosen $k$ coordinates.
This is a random algorithm solving a random problem, so
we have two levels of randomness. Usually when we will compute probabilities
or expectations it will be with respect to these two sources together.
The expected number of comparisons is $n_{0}n_{1}2^{-k}$
while the probability of success of one comparison is $p^{k}$.
(These statements are true assuming only model randomness).
In case of failure we try again, with other random k coordinates.
In order to estimate the expected number of tries till success we have to
enumerate how many bits are identical in the special pair $x_{0},x_{1}$.
Let this number be $j$. Then the probability of success in a single try
conditioned on $j$ is
$\left(\barr{cc}j\\k\earr\right)\left/ \left(\barr{cc}d\\k\earr\right)\right.$.
Hence the expected number of comparisons is $Tn_{1}$ where
\beqq T=n_{0}2^{-k}\sum_{j=0}^{d}\left(\barr{cc}d \\ j\earr\right)
p^{j}(1-p)^{d-j} \left(\barr{cc}d\\k\earr\right)
\left/\left(\barr{cc}j\\k\earr\right)\right.\eeqq
For small $d/k$ this is too pessimistic because most of the contribution to
the above sum comes from unlikely low $j$'s. We know that with probability
about 1/2, $j\ge pd$. Hence we get a success probability of about 1/2 with
an expected
\berrr T=n_{0}2^{-k}\sum_{j=pd}^{d}\left(\barr{cc}d \\
j\earr\right)p^{j}(1-p)^{d-j}
\left(\barr{cc}d\\k\earr\right)\left/\left(\barr{cc}j\\k\earr\right)\right.
\approx \\ \approx n_{0}2^{-k}
\left(\barr{cc}d\\k\earr\right)\left/\left(\barr{cc}pd\\k\earr\right)\right.=
n_{0}\prod_{i=0}^{k-1}\frac{1-i/d}{2(p-i/d)} \eerrr
Now it is clear that increasing $k$ above $(2p-1)d$ increases $T$, which
is counterproductive.

\section{An Intuitive Argument for the Marginally Bernoulli(1/2) Example}
\label{intuitive}
In full generality our algorithm is not very intuitive. In this section we will
present an intuitive argument for the special case of the joint probability
matrices
\beq P_{i}=\left( \barr{cc}
p_{i}/2 & (1-p_{i})/2 \\
(1-p_{i})/2 & p_{i}/2
\earr \right) \qquad 1\le i\le d\eeq
The impatient reader may skip this and the next section,
jumping directly to the algorithm.
Let us order the coordinates in decreasing order of importance
\beq p_{1}\ge p_{2}\ge \cdots \ge p_{d} \eeq
Moreover let us bunch coordinates together into $g$ groups of
$d_{1},d_{2},\dots,d_{g}$ coordinates, where $\sum_{h=1}^{g}d_{h}=d$,
and the members of group $h$ all have the same probability $q_{h}$
\beq p_{d_{1}+\cdots+d_{h-1}+1}=\cdots =p_{d_{1}+\cdots+d_{h}}=q_{h} \eeq
Out of the $d_{h}$ coordinates in group $h$, the special pair will agree in
approximately $q_{h}d_{h}$ 'good' coordinates. Let us make things simple by
pretending that this is the exact value (never mind that it is not an integer).
We want to choose
\beq k=\log_{2}n_{0} \eeq
coordinates and compare pairs which agree on them. The greedy approach seems to
choose as many as possible from the group 1, but conditional greed disagrees.
Let us pick the first coordinate randomly from group 1. If it is bad, the whole
try is lost. If it is good, group 1 is reduced to size $d_{1}-1$, out of which
$q_{1}d_{1}-1$ are good. Hence the probability that a remaining coordinate
is good is reduced to
\beq \frac{q_{1}d_{1}-1}{d_{1}-1} \eeq
After taking $m$ coordinates out of group 1, its probability decreases to
\beq\frac{q_{1}d_{1}-m}{d_{1}-m} \eeq
Hence after taking
\beq m=\frac{q_{1}-q_{2}}{1-q_{2}}d_{1} \eeq
coordinates, group 1 merges with group 2. We will randomly chose coordinates
from this merged group till its probability drops to $q_{3}$.
At that point the probability of a second group coordinate to be chosen is
\beq \frac{q_{2}-q_{3}}{1-q_{3}} \eeq
while the probability of a first group coordinate being picked either before
or after the union is
\beq \frac{q_{1}-q_{2}}{1-q_{2}}+\left(1-\frac{q_{1}-q_{2}}{1-q_{2}}\right)
\frac{q_{2}-q_{3}}{1-q_{3}} = \frac{q_{1}-q_{3}}{1-q_{3}} \eeq
This goes on till at some $q_{l} = p_{cut}$ we have $k$ coordinates.
Then the probability that coordinate $i$ is chosen is
\beq \max\left[\frac{p_{i}-p_{cut}}{1-p_{cut}},0\right] \eeq
as stated in the introduction. The cutoff probability is determined by
\beq \sum_{i=1}^{d}\max\left[\frac{p_{i}-p_{cut}}{1-p_{cut}},0\right]
\approx k \eeq

The previous equation can be iteratively solved. However it is better
to look from a different angle. For each try we will have to generate
$d$ independent uniform $[0,1]$ random real numbers
\beq 0< r_{1},r_{2},\ldots,r_{d}<1 \eeq
one random number per coordinate. Then we take coordinate $i$ iff
\beq r_{i}\le\frac{p_{i}-p_{cut}}{1-p_{cut}} \eeq
Let us reverse direction. Generate $r_{i}$ first, and then compute for which
$p_{cut}$'s coordinate $i$ is taken:
\beq p_{cut}\le 2^{-\lambda_{i}}=\max\left[\frac{p_{i}-r_{i}}{1-r_{i}},0
\right] \label{cuty}\eeq
Denoting the right hand side by $2^{-\lambda_{i}}$ is
unnecessarily cumbersome at this stage, but will make sense later.
We will call $\lambda_{i}$ the $\mathbf{random}$ $\mathbf{exponent}$ of
 coordinate $i$ (random because it is $r_{i}$ dependent).
Remember that $p_{cut}>0$ so $\lambda_{i}=\infty$  means that for that value
of $r_{i}$ coordinate $i$ can not be used.
Now which value of $p_{cut}$ will get us $k$ coordinates? There is no need
to solve equations. Sort the $\lambda_{i}$'s in nondecreasing order, and
pick out the first $k$. Hence
\beq p_{cut}=2^{-\lambda_{cut}} \eeq
where the $\mathbf{cutoff}$ $\mathbf{exponent}$ $\lambda_{cut}$ is the value
of the $k'th$ ordered random exponent.

It takes some time to comprehend the effect of equation (\ref{cuty}).
The random element seems overwhelming.
The probability that coordinate 1 will have larger
random exponent than coordinate 2 when $p_{1}>p_{2}$ is
\beq \frac{1}{2}\frac{1-p_{1}}{1-p_{2}} \eeq
In particular the probability that a useless coordinate with $p_{i}=0.5$
precedes a good coordinate with $p_{i}=0.9$ is 0.1 ! However the chance
that the useless coordinate  will be ranked among the first $k$ is very small,
unless we have so little data that it is better to take $k<\ln n_{0}$.

\section{An Unlimited Homogeneous Data Example} \label{unlimited}

The previous section completely avoids an important aspect of the general
problem which will be presented by the following example. Suppose we have
an unlimited amount of data $d\rightarrow \infty$ of the same type
\beq P=\left( \barr{cc}
p_{00} & p_{01} \\
p_{10} & p_{11}
\earr \right) \qquad 1\le i\le d\eeq
where
\beq p_{00}+p_{01}+p_{10}+p_{11}=1 \eeq
This is the joint probability of the dependent pair, and the marginal
probabilities govern the distribution of the remaining points. In the set
$X_0$ the probability that bit $i$ is 0 is
\beq p_{0*}=p_{00}+p_{01} \eeq
and similarly in $X_{1}$
\beq p_{*0}=p_{00}+p_{10} \eeq
The * means ``don't care''.
A reasonable pairing algorithm (very similar in this case to the general
algorithm) is to pick coordinates at random $1\le i_{1},i_{2},\ldots \le d$.
After picking $k$ coordinates, an $X_{0}$ point
$x_{l}=(x_{l1},x_{l2},\ldots,x_{ld})$ is in a bucket of expected size
\beq n_{0}\prod_{t=1}^{k}p_{x_{lt}*} \eeq
Hence it makes sense to increase $k$ only up to the point where 
$ n_{0}\prod_{t=1}^{k}p_{x_{lt}*} < 1$, and then compare with all $X_{1}$
points in its cell. This makes $k$ point dependent.
The expected number of comparisons in a single try is at most
$n_{1}$.  What is the approximate success probability?

Our initial estimate was the following. The probability that the special pair
will agree in a single coordinate is $p_{00}+p_{11}$
The amount of information in a single $X_{0}$ coordinate is
$ -p_{0*}\ln p_{0*}-p_{1*}\ln p_{1*} $ so we will need about
\beq k\approx\frac{\ln n_{0}}{-p_{0*}\ln p_{0*}-p_{1*}\ln p_{1*}} \eeq
coordinates, and the success probability is estimated by
\beq (p_{00}+p_{11})^k \approx n_{0}^{-\frac{\ln(p_{00}+p_{11})}
{p_{0*}\ln p_{0*}+p_{1*}\ln p_{1*}}}\eeq
This estimate turns out to be disastrously wrong. For the bad matrix
\beq \left( \barr{cc}
1-2\epsilon & \epsilon \\
\epsilon & 0
\earr \right) \eeq
with small $\epsilon$ it suggests exponent $-1/\ln\epsilon$, while clearly
it is worse than 1. The interested reader might pause to figure out what went
wrong, and how this argument can be salvaged.

There is an almost exact simple answer with a surprising flavor.
We expect $n_{0}^{-\lambda}$, so let us check that for consistency.
Pick the first coordinate. With probability $p_{00}$, the expectation $n_{0}$
is reduced to $n_{0}p_{0*}$. With probability $p_{11}$
it is reduced to $n_{0}p_{1*}$, and with probability $p_{22}=1-p_{00}-p_{11}$
the try is already lost. Hence
\beq n_{0}^{-\lambda}\approx p_{00}(n_{0}p_{0*})^{-\lambda}+
p_{11}(n_{0}p_{1*})^{-\lambda} \eeq
Happily $n_{0}$ drops out, leaving us with
\beq p_{00}p_{0*}^{-\lambda}+p_{11}p_{1*}^{-\lambda}=1 \eeq
which determines the exponent $\lambda$.
It is very easy to convert this informal argument into a formal
theorem and proof. A harder task awaits us.

\section{The General Algorithm and its Performance} \label{general}

\begin{definition}
The independent data model is the following. We generalize from bits to $b$
discrete values. Let the sets
\beq X_{0},X_{1}\subset \{0,1,\ldots,b-1\}^{d} \eeq
of cardinalities
\beq \#X_{0}=n_{0},\quad \#X_{1}=n_{1} \eeq
be randomly constructed in the following way. The $X_{0}$ points are
identically distributed independent Bernoulli random vectors, with $p_{i,j*}$
denoting the probability that coordinate $i$ has value $j$.
There is a special pair of $X_{0},X_{1}$ points, randomly chosen out of the
$n_{0}n_{1}$ possibilities. For that pair the probability that both their
$i$'th coordinates equal $j$ is $p_{i,j}$ with no dependency between
coordinates. The rest of the $X_{1}$ points can be anything.
(We abbreviate the usual notation $p_{i,jj}$ to $p_{i,j}$,
because we will consider only the diagonal and the marginal probabilities.)
Denote
\beq p_{i,b}=1-\sum_{j=0}^{b-1}p_{i,j} \eeq
\beq P_{i}=\left(\barr{llll}p_{i,0}&p_{i,1}&\ldots&p_{i,b-1}\\
p_{i,0*}&p_{i,1*}&\ldots&p_{i,b-1\ *}\earr\right) \eeq
\end{definition}
We propose the following algorithm. It consists of several bucketing tries.
For each try we generate $d$ independent uniform $[0,1]$ random real numbers
\beq 0< r_{1},r_{2},\ldots,r_{d}<1 \eeq
one random number per coordinate.
For each coordinate $i$ we define its $\mathbf{random}$ $\mathbf{exponent}$
$\lambda_{i}\ge 0$ to be the unique solution of the monotone equation
\beq \sum_{j=0}^{b-1}\frac{p_{i,j}}{
(1-r_{i})p_{i,j*}^{\lambda_{i}}+r_{i}}=1 \label{imp}\eeq
or $+\infty$ when there is no solution.
($p_{i,j*}^{\lambda_{i}}$ means $(p_{i,j*})^{\lambda_{i}}$).
We lexicographically sort all the $n_{0}+n_{1}$ points, with lower exponent
coordinates given precedence over larger exponent coordinates, and the
coordinate values $0,1,\ldots,b-1$ arbitrarily arranged, even without
consistency. Now each $X_{1}$ point is compared with the preceding $a$ and
following $a$ $X_{0}$ points (or fewer near the ends). The comparisons are
done in some one-on-one way, and the algorithm is considered successful if it
asks for the correct comparison. The best $a$ is problem and computer
dependent, but is never large. Each try makes at most $2an_{1}$ comparisons.
Of course there is extra $n_{0}+n_{1}$ point handling work.

A nice way to write the lexicographic ordering of the algorithm follows.
Suppose that in try $t$ the sorted random exponents are
\beq \lambda_{\pi_{1}}<\lambda_{\pi_{2}}<\cdots<\lambda_{\pi_{d}} \eeq
Then each point
\beq x=(x_{1},x_{2},\ldots,x_{d})\in \{0,1,\ldots ,b-1\}^{d} \eeq
is projected into the interval $[0,1]$ by
\beqq R_{t}(x)=\sum_{i=1}^{d}p_{\pi_{1},x_{\pi_{1}}*}\ p_{\pi_{2},x_{\pi_{2}}*}
\ \cdots\ p_{\pi_{i-1},x_{\pi_{i-1}}*}
\sum_{j=0}^{x_{\pi_{i}}-1}{p_{\pi_{i},j*}} \eeqq
The projection order is a lexicographic order. For large dimension $d$,
$R_{t}(x)$ is approximately uniformly distributed in $[0,1]$.

We will prove that the number of tries $T$ needed for success satisfies
\beq \ln T\sim \max_{\lambda\ge 0}\left[\lambda\ln n_{0}-
\sum_{i=1}^{d}F(P_{i},\lambda)\right] \label{cut} \eeq
where
\begin{definition}
The bucketing forest information function $F(P_{i},\lambda)$ is
\berr F(P_{i},\lambda)=\min_{\barr{ccc}0\le q_{i,0},\ldots,q_{i,b}\\
\sum_{j=0}^{b}q_{i,j}=1\\ \sum_{j=0}^{b-1}\frac{q_{i,j}}{p_{i,j*}^{\lambda}}
\le 1\earr} \sum_{j=0}^{b}p_{i,j}\ln\frac{p_{i,j}}{q_{i,j}}=\label{up}\\=
\max_{0\le r_{i}\le 1}\sum_{j=0}^{b}p_{i,j}\ln\left(1-r_{i}+
r_{i}\frac{(j\ne b)}{p_{i,j*}^{\lambda}}\right)\label{low} \eerr
The two dual extrema points are related by
\beq q_{i,j}=\frac{p_{i,j}}{1-r_{i}+r_{i}\frac{(j\ne b)}{p_{i,j*}^{\lambda}}}
\eeq
For $\sum_{j=0}^{b-1}\frac{p_{i,j}}{p_{i,j*}^{\lambda}}\le 1$
$r_{i}=0,\ q_{i,j}=p_{i,j},\ F(P_{i},\lambda)=0$. Otherwise
\beq \sum_{j=0}^{b-1}\frac{q_{i,j}}{p_{i,j*}^{\lambda}}=
\sum_{j=0}^{b-1}\frac{p_{i,j}}{(1-r_{i})p_{i,j*}^{\lambda}+r_{i}}=1 \eeq
\end{definition}
We will get (\ref{up}) from the upper bound theorem, and (\ref{low}) from
the lower bound theorem. Their equivalence is a simple (though a bit
surprising) application of Lagrange multipliers in a convex setting.
Representation (\ref{low}) implies that $F(P,\lambda)$ is an increasing
convex function of $\lambda$.

The $\mathbf{cutoff}$ $\mathbf{exponent}$ $\lambda_{cut}$ attains (\ref{cut}).
It has several meanings.
\begin{enumerate}
\item
In each try the coordinates with $\lambda_{i}\le \lambda_{cut}$ define
a bucket of size $e^{\epsilon n_{0}}$ for some small real $\epsilon$.
\item
If we double $n_{0}$ the number of tries needed to achieve success probability
$1/2$ is approximately multiplied by $2^{\lambda_{cut}}$.
\item
If we delete coordinate $i$, then the number of tries needed to achieve
success probability $1/2$ is on average multiplied by
$e^{F_{i,p}(\lambda_{cut})}$.
\end{enumerate}

Switching $X_{0}$ and $X_{1}$ may result in a different algorithm.
Coordinate values can be changed and/or merged in possibly
different ways for $X_{0},X_{1}$. For each possibility we have an estimate of
its effectiveness, and the best should be taken.

In real applications there is dependence, and the probabilities have to be
estimated. Our practical experience indicates that this is a robust algorithm.
Details will be described in another paper.

\section{An Alternative Algorithm} \label{alternative}

There is an interesting alternative to the random ordering of coordinates.
Suppose we have training sets $X_{0},X_{1}$ both of size $n$, such that each
$X_{0}$ point is paired with a known $X_{1}$ point. Let us estimate
the probabilities $P_{i}$ by their empirical averages.
For each coordinate $i$ its $\mathbf{exponent}$ $\lambda_{i}\ge 0$ is defined
by
\beq \sum_{j=0}^{b-1}\frac{p_{i,j}}{p_{i,j*}^{\lambda_{i}}}=1 \eeq
Arrange the coordinates in the $\mathbf{greedy}$ $\mathbf{order}$ of
nondecreasing exponents. Perform the first try using that order just like in
the previous algorithm. Remove the pairs found from the training data, and
repeat recursively on the reduced training data. Stop after the training set is
reduced to $1/3$ (for example) of its original size, or you run out of memory.
The memory problem can be alleviated by keeping only the heads of
coordinate lists, and/or running training and working tries in parallel.

This simpler algorithm has a more complicated and/or less efficient
implementation, and lacks theory.

\section{Return of the Marginally Bernoulli(1/2) Example} \label{return}

For the marginally Bernoulli(1/2) example equation (\ref{imp}) is
\beq 2\frac{p_{i}/2}{(1-r_{i})2^{-\lambda_{i}}+r_{i}}=1 \eeq
which can be recast as the familiar
\beq 2^{-\lambda_{i}} = \frac{p_{i}-r_{i}}{1-r_{i}} \eeq
The bucketing forest information function is
\beqq F(P_{i},\lambda)=\left\{\barr{ll}p_{i}\ln\frac{p_{i}}{2^{-\lambda}}+
(1-p_{i})\ln\frac{1-p_{i}}{1-2^{-\lambda}}
 & p_{i}\ge 2^{-\lambda} \\
 0 & p_{i}\le 2^{-\lambda} \earr\right. \eeqq
The cutoff exponent attains (\ref{cut}). The extremal condition is the familiar
\beq \sum_{i=1}^{d}\max\left[\frac{p_{i}-2^{-\lambda_{cut}}}
{1-2^{-\lambda_{cut}}},0\right]=\log_{2}n_{0} \eeq

Let us now specialize to $p_{1}=p_{2}=\cdots=p_{d}=p$. Then
\beq \lambda_{cut}=-\log_{2}\frac{pd-\log_{2}n_{0}}{d-\log_{2}n_{0}} \eeq
Notice that $\log_{2}n_{0}>(2p-1)d$ is equivalent to $\lambda_{cut}>1$.
In general $\lambda_{cut}>1$ signals that the available bucketing forest
information is of such low quality that the trees are worse than random near
their leafs.

\section{Sparsity} \label{sparsity}

Let us specialize to sparse bits: $b=2$,
\beq p_{i,1*},p_{i,}+p_{i,11}<<1 \label{asy1}\eeq
We will also assume that for some fixed $\delta>0$
\beq p_{i,11}\ge\delta\left(p_{i,01}+p_{i,10}\right) \label{asy2}\eeq
The equation
\beq \frac{p_{i,00}}{(1-r_{i})p_{i,0*}^{\lambda_{i}}+r_{i}}+
\frac{p_{i,11}}{(1-r_{i})p_{i,1*}^{\lambda_{i}}+r_{i}} =1 \eeq
has two asymptotic regimes: one in which $p_{i,0*}^{\lambda_{i}}$ is nearly
constant and $p_{i,1*}^{\lambda_{i}}$ changes, and vice versa. The first
regime is the important one:
\beq p_{i,00}+\frac{p_{i,11}}{(1-r_{i})p_{i,1*}^{\lambda_{i}}+r_{i}}
\approx 1\eeq
\beq \lambda_{i}\approx \frac{\ln\left[1-\frac{1}
{(1-r_{i})\left(1+\frac{p_{i,11}}{p_{i,01}+p_{i,10}}\right)}\right]}
{\ln p_{i,1*}} \label{sparse}\eeq
In practice the probabilities have to be estimated from the data, and sparse
estimates must be unreliable, so we used the more conservative
\beq 1/\tilde{\lambda}_{i}=(1-r_{i})\frac{p_{i,11}}{p_{i,01}+p_{i,10}}
\ln\frac{1}{p_{i,1*}} \label{cons}\eeq

A very important practical point is that the general algorithm 
preserves sparsity. Suppose that instead of points
\beq x=(x_{1},x_{2},\ldots,x_{d})\in \{0,1\}^{d} \eeq
we have subsets of a features set $D$ of cardinality $d$ :
\beq D_{x}\subset D \eeq
In try $t$ we use a hash function ${\rm hash}_{t}:D\rightarrow [0,1]$.
For each feature $i\in D$ its random exponent $\lambda_{i}$ is computed
using the pseudo random
\beq  r_{i}={\rm hash}_{t}(i) \eeq
and the random exponents of $x$ are sorted
\beq \lambda_{\pi_{1}}<\lambda_{\pi_{2}}<\cdots<\lambda_{\pi_{\nu}} \eeq
Then the sequence of features
\beq (\pi_{1},\pi_{2},\dots,\pi_{\nu}) \eeq
is a sparse representation of $x$ whose lexicographic order is used in try $t$.

\section{The Downside of Dimensionality Reduction} \label{downside}

Another way of handling sparse approximate neighbor problems is to convert
them into dense problems by a random projection. For dense problems taking
some $k$ out of the $d$ coordinates can be an effective way to reduce
dimension. For sparse problems such a sampling reduction will remain sparse,
hence dense projection matrices are used instead. We will show that this can
result in a much worse algorithm. Let us consider the unlimited homogeneous
data example with
\beq p_{01}=p_{10},\quad n_{0}=n_{1}=n \eeq
because in general it is
not clear which projections to take and how to analyze their performance.
We have a $d$ dimensional Hamming cube $\{0,1\}^{d}$. The Hamming distance
between two random $X_{0},X_{1}$ points is approximately
\beq 2p_{0*}p_{1*}d \eeq
The Hamming distance between the two special points is approximately
\beq 2p_{0,1}d \eeq
Hence when the dimension $d$ is large, the random to special distances ratio
tends to
\beq c=\frac{p_{0*}p_{1*}}{p_{01}} \eeq
The ideal dimensionality reduction would be to project $\{0,1\}^{d}$
into a much lower dimensional $\{0,1\}^{k}$ in such a way that the images
of the $X_{0},X_{1}$ points are random $\{0,1\}^{k}$ points, and the
distance between the two special images is approximately $k/2c$
($k/2$ is the approximate distance between two random image points).
Hence after the dimensionality reduction we will have a homogeneous marginally
Bernoulli(1/2) problem with
\beq p=1-1/2c \eeq
The standard nearest neighbor algorithm solves this in approximately
\beq n^{\log_{2}\frac{2c}{2c-1}} \eeq
tries. Actual dimensional reductions fall short of this ideal.
The Indyk and Motwani theory \cite{IM98} states that
\beq n^{1/c} \eeq
tries suffice. The truth is somewhere in between.

In contrast without dimensionality reduction our algorithm takes
approximately $n^{\lambda}$ tries where $\lambda$ is determined by
\beq \frac{1-p_{1*}-p_{01}}{(1-p_{1*})^{\lambda}}+
\frac{p_{11}}{p_{1*}^{\lambda}}=1 \label{spaeq}\eeq
In the asymptotic region (\ref{asy1},\ref{asy2}) inserting $r=0$ into
(\ref{sparse}) results in
\beq \lambda \approx \frac{\ln\left[1+\frac{2p_{01}}{p_{11}}\right]}
{\ln 1/p_{1*}} \approx \frac{\ln\frac{c+1}{c-1}}{\ln 1/p_{1*}} \eeq
We encourage the interested reader to look at his favorite dimensional
reduction scheme, and see that the $\ln 1/p_{1*}$ factor is really lost.

\section{Lexicographic and Bucketing Forests} \label{forest}

Our general algorithm is of the following type.
\begin{definition}
A lexicographic tree algorithm is the following.
The $d$ coordinates are arranged according to some permutation.
Than a complete lexicographic ordered tree is
generated. It is defined recursively as a root pointing towards $b$ subtrees,
with the edges denoting the possible values of the first (after permutation)
coordinate arbitrarily ordered. The subtrees are complete lexicographic
ordered trees for the remaining $d-1$ coordinates. In particular the
lexicographic tree has $b^{d}$ ordered leafs, each denoting a point in
$\{0,1,\ldots,b-1\}^{d}$.
A lexicographic tree  algorithm arranges the $n_{0}+n_{1}$ $X_{0}\cup X_{1}$
points according to the tree, and then compares each $x_{1}$ point with its $a$
neighbors right and left. This insures no more than $2an_{1}$ comparisons
per tree.
A lexicographic forest is simply a forest of lexicographic trees,
each having its own permutation. It succeeds iff at least one tree succeeds.
\end{definition}
An obvious generalization is
\begin{definition} A semi-lexicographic tree algorithm has a 'first'
coordinate and then recursively each subtree is semi-lexicographic,
until all coordinates are exhausted.
\end{definition}
For example we can start with coordinate 3, and than consider coordinate 5
if the value is 0, or coordinate 2 if the value is 1 and so on.

The success probability of a lexicographic forest is very complicated,even
before randomizing the algorithm. For that reason we will consider an uglier
non-robust class of algorithms that are easier to understand and analyze.
\begin{definition}
A bucketing tree algorithm is predictably recursively defined.
Either compare all pairs (a leaf bucket), or take one coordinate,
split the data into $b$ parts according to its value (some parts
may be empty), and apply a bucketing tree algorithm on each part separately.
In order to have no more than $an_{0}$ expected comparisons we will insist
that each leaf expects no more than $a$ points belonging to $X_{0}$.
A bucketing forest is simply a forest of bucketing trees.
It succeeds iff at least one tree succeeds.
\end{definition}

The success probability of a bucketing forest is no bed of roses.
Let us denote a leaf by $w\in\{0,1,\ldots,b\}^{d}$, with b indicating that the
corresponding coordinate is not taken. The leaf $w$ expects
\beq n_{0}\prod_{i=1}^{d}\left\{\barr{ll}p_{i,w_{i}*} & w_{i}<b \\
1 & w_{i}=b \earr \right. \eeq
$X_{0}$ points, and its success probability is
\beq \prod_{i=1}^{d}\left\{\barr{ll}p_{i,w_{i}} & w_{i}<b \\
1 & w_{i}=b \earr \right. \eeq
The success probability of a tree is the sum of the success probabilities
of its leafs. The success probability of the whole forest is less than
the tree sum. Suppose the whole forest contains $L$ leafs
$w_{1},w_{2},\ldots,w_{L}$.
 Let $y\in\{0,1,\ldots,b\}^{d}$ denote the abbreviated state
of the special points:
\beq y_{i} = \left\{ \barr{ll}x_{0,i} & x_{0,i}=x_{1,i} \\
b & x_{0,i}\ne x_{1,i} \earr \right. \eeq
The value $b$ denotes disagreement and its probability is
$ p_{i,b}=1-\sum_{j=0}^{b-1}p_{i,j} $.
The success probability of the whole forest is
\berrr S=\sum_{y\in\{0,1,\ldots,b\}^{d}}\prod_{i=1}^{d}p_{i,y_{i}}\cdot
\qquad\qquad\qquad\qquad\qquad \\ \qquad\qquad
\cdot\left[1-\prod_{l=1}^{L}\left(1-\prod_{i=1}^{d}(w_{l,i}==y_{i}\ ||
\ w_{l,i}==b)\right)\right] \label{recall} \eerrr
Remember that $(w_{l,i}==y_{i}\ ||\ w_{l,i}==b)=0,1$ hence the two
rightmost products are just logical ands, and $1-()$ is a logical not.

Our algorithm is almost a bucketing forest, except that
the leaf condition is data dependent (for robustness).
A truly variable scheme can shape the buckets in a more complicated data
dependent way, see for example Gennaro Savino and Zezula \cite{GSZ01}.
Non-tree bucketing can use several coordinates together, so that the
resulting buckets are not boxes, see for example Andoni and Indyk \cite{AI06}
or \cite{MD2}.

\section{A Bucketing Forest Upper Bound} \label{bucketingupper}

In this section we will bound the performance of bucketing forest algorithms.
It is tricky, but technically simpler and more elegant than proving a
lower bound on the performance of a single algorithm.

\begin{theorem}
Assume the independent data model.
The success probability $P$ of a nonempty bucketing tree whose leafs
all have probabilities at most $1/N$ is at most
\beq P\le N^{-\lambda}\prod_{i=1}^{d}\max\left(1,
\sum_{j=0}^{b-1}\frac{p_{i,j}}{p_{i,j*}^{\lambda}}\right) \eeq
for any $\lambda\ge 0$. We do not even have to assume $p_{i,j}\le p_{i,j*}$.
\end{theorem}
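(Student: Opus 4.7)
My plan is to decouple the theorem into two independent inequalities: a per-leaf weight bound that extracts the uniform factor $N^{-\lambda}$, and a summation bound over the tree leaves that produces $\prod_i\max(1,\sum_j p_{i,j}/p_{i,j*}^\lambda)$. Combining them then yields the claim.

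For the first step I would rewrite the success probability of each leaf $w\in\{0,1,\ldots,b\}^d$ (using the paper's conventions $p_{i,b}=1$ and setting $p_{i,b*}=1$) as
\[ \prod_{i=1}^{d}p_{i,w_i} \;=\; \Bigl(\prod_{i:\,w_i<b}p_{i,w_i*}^{\lambda}\Bigr)\cdot\prod_{i=1}^{d}f_i(w_i), \]
where $f_i(j):=p_{i,j}/p_{i,j*}^\lambda$ for $j<b$ and $f_i(b):=1$. Since $\lambda\ge 0$ and each $p_{i,w_i*}\in[0,1]$, the leaf-mass hypothesis $\prod_{i:\,w_i<b}p_{i,w_i*}\le 1/N$ upgrades to $\prod_{i:\,w_i<b}p_{i,w_i*}^{\lambda}\le N^{-\lambda}$. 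This upper bound is uniform across leaves and pulls out of the sum, leaving $P\le N^{-\lambda}\,S(T)$, where $S(T):=\sum_{w\text{ leaf of }T}\prod_i f_i(w_i)$. Notice that this step only uses $p_{i,w_i*}\in[0,1]$, which explains why the theorem needs no assumption like $p_{i,j}\le p_{i,j*}$.

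For the second step I would prove by structural induction on $T$ that
\[ S(T)\;\le\;\prod_{i=1}^{d}\max(1,g_i),\qquad g_i:=\sum_{j=0}^{b-1}\frac{p_{i,j}}{p_{i,j*}^\lambda}. \]
If $T$ is a single leaf then $w_i=b$ for every $i$ and $S(T)=1$, which certainly does not exceed the right-hand side. If instead the root of $T$ splits on some coordinate $i_0$ into subtrees $T_0,\ldots,T_{b-1}$ (empty subtrees contribute zero), then $w_{i_0}=j$ at every leaf of $T_j$, so
\[ S(T)=\sum_{j=0}^{b-1}f_{i_0}(j)\,S(T_j)\;\le\;g_{i_0}\prod_{i\ne i_0}\max(1,g_i)\;\le\;\prod_i\max(1,g_i), \]
by applying the induction hypothesis to each $T_j$ regarded as a bucketing tree on the $d-1$ remaining coordinates.

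The only subtle point I anticipate is the role of the \emph{unused coordinates}: these contribute $f_i(b)=1$ rather than a $g_i$, and the stated bound must accommodate both cases simultaneously. The $\max(1,\cdot)$ around each $g_i$ is precisely what lets the induction close --- the bound absorbs $g_i$ when the tree actually splits on $i$ and absorbs $1$ otherwise, and since $\max(1,g_i)$ dominates both, one obtains a single clean product over all $d$ coordinates regardless of which are used. Chaining the two inequalities then gives $P\le N^{-\lambda}\prod_i\max(1,g_i)$, as desired.
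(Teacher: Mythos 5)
Your proof is correct and takes essentially the same route as the paper's: both are structural inductions over the bucketing tree, exploiting that splitting on coordinate $i_0$ multiplies the bound by $g_{i_0}=\sum_{j=0}^{b-1}p_{i_0,j}/p_{i_0,j*}^{\lambda}$ while unused coordinates contribute a neutral factor absorbed by $\max(1,\cdot)$. The only presentational difference is that you extract the $N^{-\lambda}$ factor uniformly from every leaf up front and then run an $N$-free induction on the residual weight $S(T)$, whereas the paper's induction carries the shrinking leaf-probability threshold ($N\mapsto Np_{1,j*}$) through the inductive hypothesis itself; the two organizations are equivalent.
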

\begin{proof}
Use induction. Without losing generality split coordinate $1$.
The induction step
\beq P\le\sum_{j=0}^{b-1}p_{1,j}\left(N p_{1,j*}\right)
^{-\lambda}\prod_{i=2}^{d}\max\left(1,
\sum_{j=0}^{b-1}\frac{p_{i,j}}{p_{i,j*}^{\lambda}}\right) \eeq
is valid for both proper and point-only subtrees. The maximization with
1 is necessary because coordinates can be ignored.
\end{proof}

\begin{theorem} \label{treeup}
Assume the independent data model.
Suppose an bucketing forest contains $T$ trees, its success probability is $S$,
and all its leafs have probabilities at most $1/N$. Than for any
$\lambda\ge 0$
\beqq \ln T\ge \lambda\ln N+\ln\frac{S}{2}-\sqrt{\frac{4}{S}\sum_{i=1}^{d}
V(P_{i},\lambda)}-\sum_{i=1}^{d}F(P_{i},\lambda) \eeqq
where
\beq V(P_{i},\lambda)=\sum_{j=0}^{b}p_{i,j}\left(\ln\frac{p_{i,j}}{q_{i,j}}
-\sum_{k=0}^{b}p_{i,k}\ln\frac{p_{i,k}}{q_{i,k}}\right)^{2} \eeq
and the $q_{i,j}$'s are the minimizing arguments from $F$'s definition
(\ref{up})
\end{theorem}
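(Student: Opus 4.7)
The plan is to prove this by a change-of-measure argument from the special-pair distribution $P$ to an auxiliary product measure $Q^{*}=\prod_{i}q^{*}_{i}$ built from the minimizers $q^{*}_{i,j}$ appearing in the primal definition (\ref{up}) of $F(P_{i},\lambda)$. The virtue of $Q^{*}$ is that its $i$th marginal satisfies $\sum_{j=0}^{b-1}q^{*}_{i,j}/p_{i,j*}^{\lambda}\le 1$, which makes every $\max(1,\cdot)$ factor collapse to $1$ when the preceding single-tree theorem is applied with $q^{*}$ in place of $p$, yielding the clean estimate $Q^{*}(A_{\tau})\le N^{-\lambda}$ for every bucketing tree $\tau$. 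Here $A_{\tau}\subseteq\{0,1,\ldots,b\}^{d}$ is the set of agreement patterns $y$ on which $\tau$ succeeds; this depends only on $y$ because a tree routes the special pair into a common leaf iff every splitting coordinate along the root-to-leaf path satisfies $y_{i}\ne b$.

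First I would verify the two moment identities for the log-likelihood ratio $L(y)=\sum_{i=1}^{d}\ln(p_{i,y_{i}}/q^{*}_{i,y_{i}})$. Coordinates of $y$ are independent under $P$, so $E_{P}[L]=\sum_{i=1}^{d}F(P_{i},\lambda)$ directly by the primal form of $F$, and since $V(P_{i},\lambda)$ was specifically defined as the $P$-variance of the $i$th summand of $L$, $\mathrm{Var}_{P}[L]=\sum_{i=1}^{d}V(P_{i},\lambda)$. This is where the variance term in the statement enters the picture.

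A Chebyshev step next carves out a typical set $B=\{y:L(y)\le\sum_{i}F(P_{i},\lambda)+t\}$ whose complement has $P$-mass at most $S/2$, for $t$ of order $\sqrt{\sum_{i}V(P_{i},\lambda)/S}$; the precise constant $4$ inside the square root reflects a particular allocation of the Chebyshev slack. On $B$ the density ratio satisfies $dQ^{*}/dP\ge\exp\bigl(-\sum_{i}F(P_{i},\lambda)-t\bigr)$. Intersecting $B$ with the forest-success event $\bigcup_{\tau}A_{\tau}$, whose $P$-mass equals $S$, gives $P\bigl(B\cap\bigcup_{\tau}A_{\tau}\bigr)\ge S/2$, and reweighting by the density ratio transfers this to $Q^{*}\bigl(\bigcup_{\tau}A_{\tau}\bigr)\ge (S/2)\exp\bigl(-\sum_{i}F(P_{i},\lambda)-t\bigr)$.

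The concluding step is the union bound $Q^{*}\bigl(\bigcup_{\tau}A_{\tau}\bigr)\le\sum_{\tau}Q^{*}(A_{\tau})\le T\,N^{-\lambda}$, which combined with the previous lower bound and a logarithm yields the advertised inequality. The main point requiring care is the legitimacy of applying the single-tree theorem with $q^{*}$ substituted for $p$: this is permitted because the proof of that theorem never uses $p_{i,j}\le p_{i,j*}$ (as its statement explicitly emphasizes) and the leaf-size condition $\prod_{i:\,w_{i}<b}p_{i,w_{i}*}\le 1/N$ is a property of the tree itself rather than of the measure being integrated, so nothing in the induction changes. Once that substitution is justified, the rest of the argument is bookkeeping.
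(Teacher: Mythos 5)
Your change-of-measure argument is correct and is essentially the paper's own proof, merely rephrased: the paper writes the same idea as $S \le \mathrm{Prob}\{Z \ge z\} + e^{z}S_{Q}$ with $Z=\sum_{i}\ln(p_{i,y_{i}}/q_{i,y_{i}})$, chooses $z=\mathrm{E}[Z]+\sqrt{2\mathrm{Var}[Z]/S}$ by Chebyshev, and bounds $S_{Q}$ (your $Q^{*}(\bigcup_{\tau}A_{\tau})$) by $TN^{-\lambda}$ using the single-tree theorem with the $q$'s, exactly as you do. You have also correctly identified the one subtle point, namely that the single-tree bound may be applied with $q^{*}_{i,j}$ replacing the diagonal entries because it does not require $p_{i,j}\le p_{i,j*}$ and the leaf constraint depends only on the marginals $p_{i,j*}$ --- this is precisely why the paper flags that relaxation in the single-tree theorem's statement.
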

\begin{proof}
The previous theorem provides a good bound for the success probability of a
single tree, but it is not tight for a forest, because of
dependence: the failure of each tree increases the failure probability of
other trees. Now comes an interesting argument.
Recall that the success probability of the whole forest formula (\ref{recall}).
For any $z$ and $q_{i,j}>0$ we can bound
\beq S\le {\rm Prob}\{Z\ge z\} + e^{z}S_{Q} \eeq
where
\beq Z=\sum_{i=1}^{d}\ln\frac{p_{i,y_{i}}}{q_{i,y_{i}}} \eeq
\beqq {\rm Prob}\{Z\ge z\} =
\sum_{y\in\{0,1,\ldots,b\}^{d}}\prod_{i=1}^{d}p_{i,y_{i}} \cdot \left(
\sum_{i=1}^{d}\ln\frac{p_{i,y_{i}}}{q_{i,y_{i}}}\ge z\right) \eeqq
\berrr S_{Q} = \sum_{y\in\{0,1,\ldots,b\}^{d}}\prod_{i=1}^{d}q_{i,y_{i}}\cdot
\qquad\qquad\qquad\qquad\qquad \\ \qquad\qquad
\cdot\left[1-\prod_{l=1}^{L}\left(1-\prod_{i=1}^{d}(w_{l,i}==y_{i}\ ||
\ w_{l,i}==b) \right)\right] \eerrr
We insist upon
\beq \sum_{j=0}^{b}q_{i,j}=1 \eeq
so that we can use the previous lemma to bound
\beq S_{Q}\le TP_{q}\le T N^{-\lambda}\prod_{i=1}^{d}\max\left(1,
\sum_{j=0}^{b-1}\frac{q_{i,j}}{p_{i,j*}^{\lambda}}\right) \eeq
The other term is handled by the Chebyshev bound: for $z>{\rm E}(Z)$
\beq {\rm Prob}\{Z\ge z\} \le \frac{{\rm Var}(Z)}
{\left(z-{\rm E}(Z)\right)^{2}} \eeq
Together
\beq S\le \frac{{\rm Var}(Z)}{\left(z-{\rm E}(Z)\right)^{2}}+e^{z}S_{Q} \eeq
The reasonable choice of
\beq z={\rm E}(Z)+\sqrt{2{\rm Var}(Z)/S} \eeq
results in
\beq S\le 2e^{{\rm E}(Z)+\sqrt{2{\rm Var}(Z)/S}}S_{Q} \eeq
\end{proof}

Notice that this proof gives no indication that the bound is tight,
nor guidance towards constructing an actual bucketing forest, 
(except for telling which coordinates to throw away).

We tried to strengthen the theorem in the following way.
Instead of restricting the expected number of points falling into each
leaf bucket, allow larger leafs and only insist that
the total number of comparisons is at most $aN$. Surprisingly the
strengthened statement is wrong, and a 'large leafs' bucketing forest is
theoretically better than our algorithm. But it is complicated and non-robust.

\section{A Semi-Lexicographic Forest Upper Bound} \label{semiupper}

There remains the problem that we gave a lexicographic forest algorithm,
but a bucketing forest upper bound. It is a technicality,
which may be skipped over with little loss.
Any semi-lexicographic complete tree can be converted into a bucketing tree in
an obvious way: Prune the complete tree from the leafs down as much as
possible, preserving the property that each leaf expects at most $a/2$ points
from $X_{0}$. The success probability of the semi-lexicographic tree is
bounded by
\beq P\le P_{tree}+R \eeq
where $P_{tree}$ is the success probability of the truncated tree, for which we
have a good bound, and a remainder term associated with truncated tree vertexes
expecting more than $a/2$ tree points.

\begin{lemma}
Assume the independent data model and consider a semi-lexicographic tree
with the standard coordinate order (that does not restrict generality) and a
totally random values order. Assume that the special points pair
agree in coordinates $1,2,\ldots ,i-1$, but disagree at coordinate $i$ :
\beq y_{1},y_{2},\ldots,y_{i-1}\ne b,\ y_{i}=b \eeq
Conditioning on that, the probability of success is at most
\beq \frac{2a}{n_{0}p_{1,y{1}}p_{2,y_{2}}\cdots p_{i-1,y_{i-1}}} \eeq
\end{lemma}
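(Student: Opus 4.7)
The plan is to reduce the semi-lexicographic tree's success criterion to a one-dimensional nearest-neighbor problem via the projection $R_t$ introduced in Section~\ref{general}. Under the conditioning, both $x_0$ and $x_1$ share the coordinate prefix $(y_1,\ldots,y_{i-1})$, and so $R_t(x_0)$ and $R_t(x_1)$ both lie inside the same sub-interval $I\subset[0,1]$ of length $L=p_{1,y_1*}\,p_{2,y_2*}\cdots p_{i-1,y_{i-1}*}$ (reading the abbreviated $p_{j,y_j}$ of the lemma as the $X_0$ marginal). The tree succeeds iff $x_0$ is one of the $2a$ $X_0$-points whose projection is closest to $R_t(x_1)$ in sorted order.

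First I would observe that the $n_0-1$ non-special $X_0$-points, being i.i.d.\ $X_0$-distributed, project to $n_0-1$ i.i.d.\ uniform random variables on $[0,1]$. Define the (random) winning interval $W$ centered at $R_t(x_1)$, with half-widths on each side equal to the distance to the $a$-th nearest non-special projection on that side; then $x_0$ succeeds iff $R_t(x_0)\in W$. A standard order-statistic calculation for uniform points gives ${\rm E}[|W|]\le 2a/(n_0-1)$. Second I would argue that, conditionally on the event of the lemma, $R_t(x_0)$ is essentially uniform on $I$: the pinned prefix contributes a deterministic chunk to the projection, the random value order $\rho_i$ (uniform and independent of everything else) absorbs the bias that the coordinate-$i$ disagreement condition would otherwise place on the position inside $I$, and for $j>i$ the coordinate $x_{0,j}$ is distributed exactly as the $X_0$ marginal. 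Hence for every measurable $A\subseteq I$, $P\bigl(R_t(x_0)\in A\mid\text{conditioning}\bigr)\le |A|/L$.

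Putting the two pieces together,
$$P(\text{success})\;\le\;{\rm E}\bigl[\,P(R_t(x_0)\in W\mid W,R_t(x_1))\,\bigr]\;\le\;\frac{{\rm E}[|W|]}{L}\;\le\;\frac{2a}{n_0 L},$$
which is the bound claimed by the lemma.

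The main obstacle will be the density-uniformity claim in the second step: the conditioning $x_{0,i}\neq x_{1,i}$ biases the coordinate-$i$ value of $x_0$ away from the plain $X_0$ marginal, so a naive computation would pick up a spurious factor $1/p_{i,b}$. The remedy is to carry the averaging over the independent uniform $\rho_i$ through the piecewise definition of $R_t$, which redistributes the bias evenly across $I$. I also expect a small amount of care for the boundary cases in which $W$ straddles $\partial I$, and for the degenerate case $i=1$ (where $L=1$ and the lemma reduces to the trivial symmetry bound $2a/n_0$), but once the density of $R_t(x_0)$ on $I$ is controlled by $1/L$ the rest is bookkeeping with uniform order statistics.
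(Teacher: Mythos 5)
Your proposal takes a genuinely different route from the paper. The paper conditions on $m$, the number of $X_0$ points sharing the prefix $y_{1},\ldots,y_{i-1}$, argues via exchangeability plus the random order of the coordinate-$i$ value-intervals that the conditional success probability is at most $2a/m$, and then averages over $m\sim 1+\mathrm{Bin}(n_{0}-1,p)$ to get $2a/(n_{0}p)$. You instead pass to the continuous projection $R_{t}$ and try to bound the success probability by ${\rm E}[|W|]$ times a pointwise density bound for $R_{t}(x_{0})$ on $I$. You correctly identify the crux (the conditioning $x_{0,i}\ne x_{1,i}$ tilts $x_{0}$'s coordinate-$i$ law away from the $X_{0}$ marginal), but the fix you propose does not work: averaging over the random value order $\rho_{i}$ does \emph{not} make the conditional density of $R_{t}(x_{0})$ bounded by $1/L$.

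The reason is that $I$ splits into sub-intervals $I_{j}$ of lengths $Lp_{i,j*}$, and given $x_{0,i}=j$ the projection $R_{t}(x_{0})$ is roughly uniform on $I_{j}$ with density $(Lp_{i,j*})^{-1}$. The random value order permutes which $I_{j}$ sits where, but the lengths $Lp_{i,j*}$ are fixed, so whenever the tilted law of $x_{0,i}$ given $y_{i}=b$ favors a value $j$ with small marginal $p_{i,j*}$, the averaged density can greatly exceed $1/L$. Concretely, take $b=2$, $p_{i,0*}=0.9$, $p_{i,1*}=0.1$, $p_{i,01}=p_{i,10}$: conditional on $y_{i}=b$ the value $x_{0,i}$ is uniform on $\{0,1\}$, and the density of $R_{t}(x_{0})$ averaged over both value orders equals roughly $2.78/L$ on the outer $0.1L$ portions of $I$. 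Moreover the display $P(R_{t}(x_{0})\in W\mid W,R_{t}(x_{1}))\le|W|/L$ needs the density \emph{conditionally on} $R_{t}(x_{1})$, which you never address. What is actually needed is control of the law of the signed difference $R_{t}(x_{0})-R_{t}(x_{1})$ (which enjoys cancellation from the disjointness of $I_{j}$ and $I_{k}$ that the marginal density of $R_{t}(x_{0})$ alone does not), or the paper's rank-space argument that bypasses densities entirely. As written, the second step of your proposal is a genuine gap. (Your remark that $p_{j,y_{j}}$ in the statement should be read as the marginal $p_{j,y_{j}*}$ is correct; the paper's own proof needs that reading too, since $m$ is the count of $X_{0}$ points sharing the prefix, governed by the $X_{0}$ marginals.)
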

\begin{proof}
Denote
\beq p=p_{1,y{1}}p_{2,y_{2}}\cdots p_{i-1,y_{i-1}} \eeq
Let $m$ be the number of $X_{0}$ points agreeing with the special pair
in their first $i-1$ coordinates. Its probability distribution is
$1$+Bernoulli($p,n_{0}-1$). Let us consider these $m$ points ordered by the
algorithm. The rank of the special $X_{0}$ point can be $1,2,\ldots,m$
with equal probabilities. Those $m$ ordered points are broken up into up to
$b$ intervals according to the value of coordinate $i$. Where does the
special $X_{1}$ point fit in? It is in a different interval than the $X_{0}$
special point, but its location in that interval, and the order of intervals
is random. Hence the probability that the two special points
are at most $a+1$ apart is at most $2a/m$. This has to be averaged:
\beq \sum_{m=1}^{n}\left(\barr{cc}n-1\\m-1\earr\right)
p^{m-1}(1-p)^{n-m}\frac{2a}{m}=\frac{2a}{np} \eeq
\end{proof}

\begin{theorem}
Assume the independent data model. Then the success probability of any
semi-lexicographic tree with a totally random coordinate values order
is at most 
\beq P\le \frac{2\ln\left(e^{4.5}N\right)}{N^{\lambda}}
\prod_{i=1}^{d}\max\left(1,
\sum_{j=0}^{b-1}\frac{p_{i,j}}{p_{i,j*}^{\lambda}}\right) \eeq
for any $0\le\lambda\le 1$, where
\beq N=\max\left(1,\frac{2n_{0}}{a}\right) \eeq
\end{theorem}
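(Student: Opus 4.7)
The plan is to reduce the semi-lexicographic tree to a bucketing tree by truncation, apply Theorem~\ref{treeup}'s bucketing bound to the pruned tree, and control the pruned ``boundary'' via the preceding lemma. I prune the complete semi-lexicographic tree into a bucketing tree $T_b$ by truncating at every node whose expected number of $X_0$ points first drops to at most $a/2$. By the definition of $N$, every leaf of $T_b$ then has probability at most $a/(2n_0)\le 1/N$. Decompose the success probability as $P\le P_{T_b}+R$, where $P_{T_b}$ is the probability that the special pair lands in the same leaf bucket of $T_b$ (whence they are automatically among each other's $a$-neighbors in the sort, so success is guaranteed), and $R$ bounds the residual success from configurations in which the pair is first separated at some internal split of $T_b$ but the sorting nonetheless happens to place them within distance $a$.

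The bucketing tree upper bound directly gives
\[ P_{T_b}\le N^{-\lambda}\prod_{i=1}^{d}\max\!\left(1,\sum_{j=0}^{b-1}\frac{p_{i,j}}{p_{i,j*}^{\lambda}}\right). \]
For each internal node $u$ of $T_b$ at depth $i_u$, with joint path-probability $\pi_u$ along its unique root path, the preceding lemma bounds the conditional success probability by $2a/(n_0\pi_u)$ given that the pair agreed up to $u$ and then disagreed at $u$'s split coordinate; weighting by $\pi_u p_{i_u,b}$ yields a per-node contribution of $2ap_{i_u,b}/n_0 = (4/N)p_{i_u,b}$ to $R$.

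The main difficulty is summing these per-node contributions to obtain $R$ with only a $\ln N$ blow-up, since a naive count of internal nodes would lose a linear-in-$N$ factor. To overcome this I would use a dyadic level argument: group internal nodes by the dyadic interval in which their path-weighted quantity (tied to the same $p_{i,j*}^{\lambda}$ exponents appearing in the bucketing factor) falls; there are only $O(\log N)$ such intervals between $a/2$ and $n_0$. At each level, a level-sum-of-contributions bound yields at most a constant multiple of $N^{-\lambda}\prod_i\max(1,\sum_j p_{i,j}/p_{i,j*}^{\lambda})$, where the hypothesis $0\le\lambda\le 1$ is what keeps the joint-versus-marginal conversions along a path under control (so that the Lemma's $p_{t,y_t}$ and the theorem's $p_{t,y_t*}^{\lambda}$ are compatible). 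Summing the $O(\log N)$ levels produces the $\ln N$ factor, and the constant $e^{4.5}$ absorbs the factor~$2$ from the lemma, the factor~$2$ from the $P_{T_b}+R$ decomposition, and the residual constants from the geometric-sum tails; adding this bound on $R$ to the bucketing bound on $P_{T_b}$ gives the stated inequality.
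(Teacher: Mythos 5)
Your decomposition $P\le P_{T_b}+R$ and your appeal to the preceding lemma are exactly the right opening move, and they match the paper. But two things go wrong from there.

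First, your per-node contribution formula is incorrect. The lemma's denominator $p_{1,y_1}\cdots p_{i-1,y_{i-1}}$ is a product of \emph{marginals} $p_{j,y_j*}$, not joints $p_{j,y_j}$ — look at the lemma's own proof: the Bernoulli parameter $p$ is the probability that a \emph{random} $X_0$ point agrees with the special pair on the first $i-1$ coordinates, and a random $X_0$ point matches value $y_j$ with marginal probability $p_{j,y_j*}$ (the statement of the lemma simply drops the asterisks). Consequently the weighting by $\prod_j p_{j,y_j}\cdot p_{i_u,b}$ does \emph{not} cancel the lemma denominator, and the per-node contribution is $(4/N)\,p_{i_u,b}\prod_{j}\bigl(p_{j,w_j}/p_{j,w_j*}\bigr)$ rather than $(4/N)\,p_{i_u,b}$. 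This ratio factor is not a nuisance you can wave away: accumulated over a path it is precisely what produces the $\prod_i\max\bigl(1,\sum_j p_{i,j}/p_{i,j*}^{\lambda}\bigr)$ in the stated bound. With your simplified formula the product factor would never appear, so the target inequality could not emerge.

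Second, the dyadic-level grouping is not actually an argument — it's a hope. A dyadic bucket argument would have you sum $\prod_{j\le t}(p_{j,w_j}/p_{j,w_j*}^{\lambda})\,p_{t+1,b}$ over all nodes in a level, but summing this over all depths $t$ and all prefixes $w$ is not controlled by $\prod_i\max(1,A_i)$ — the depth summation alone can multiply that product by $d$. The paper avoids this by an induction over the tree: it shows $\tilde R_{w_1,\dots,w_s}\le N_{w_1,\dots,w_s}^{1-\lambda}\,\ln(eN_{w_1,\dots,w_{s-1}})\,\prod_{i>s}\max(1,A_i)$ from the leaves up, where the inductive step reduces to the elementary inequality $(1-p)+\ln p\le 0$, and the restriction $0\le\lambda\le 1$ is used both to get $N^{1-\lambda}\ge 1$ at internal nodes and to convert $p_{s,w_s*}^{1-\lambda}/p_{s,w_s*}$ into $1/p_{s,w_s*}^{\lambda}$. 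That induction is the actual content of the proof and is absent from your proposal; without it, the $\ln N$ factor, the $\prod\max$ factor, and the $e^{4.5}$ constant are all unsupported.
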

\begin{proof}
Without restricting generality assume that the coordinate have the standard
order. We have established that
\beqq R\le\!\!\!\!\!\!\!\!\!\!\!\!
\sum_{\footnotesize\barr{ccc}0\le t\le d \\ 0\le w_{1},w_{2},\ldots,w_{t}<b\\
N\prod_{i=1}^{t}p_{i,w_{i}*}\ge 1 \earr}\!\!\!\!\!\!\!\!\!\!\!\!
\frac{4}{N}\prod_{i=1}^{t}\frac{p_{i,w_{i}}}{p_{i,w_{i}*}}
\cdot \left(1-\sum_{j=0}^{b-1}p_{t+1,j}\right) \eeqq
The negative terms can be shifted to the next $t$ :
\beqq R\le\frac{4}{N}+\!\!\!\!\!\!\!\!\!\!\!\!
\sum_{\footnotesize\barr{ccc}1\le t\le d \\
0\le w_{1},w_{2},\ldots,w_{t}<b\\ N\prod_{i=1}^{t}p_{i,w_{i}*}\ge 1 \earr}
\!\!\!\!\!\!\!\!\!\!\!\!
\frac{4}{N}\prod_{i=1}^{t}\frac{p_{i,w_{i}}}{p_{i,w_{i}*}}
\cdot \left(1-p_{t,w_{t}*}\right) \eeqq
Denote
\beqq \tilde{R}_{w_{1},\ldots,w_{s}}=\!\!\!\!\!\!\!\!\!\!\!\!
\sum_{\footnotesize\barr{ccc}s\le t\le d \\
0\le w_{s+1},w_{s+2},\ldots,w_{t}<b\\ N\prod_{i=1}^{t}p_{i,w_{i}*}\ge 1 \earr}
\!\!\!\!\!\!\!\!\!\!\!\!
\prod_{i=s+1}^{t}\frac{p_{i,w_{i}}}{p_{i,w_{i}*}}
\cdot \left(1-p_{t,w_{t}*}\right) \eeqq
We will prove by induction from the leafs down that
\berr \tilde{R}_{w_{1},w_{2},\ldots,w_{s}}\le
N_{w_{1},\ldots,w_{s}}^{1-\lambda}\ln\left(eN_{w_{1},\ldots,w_{s-1}}\right)
\cdot\\ \cdot\prod_{i=s+1}^{d}\max\left(1,
\sum_{j=0}^{b-1}\frac{p_{i,j}}{p_{i,j*}^{\lambda}}\right) \eerr
where
\beq N_{w_{1},\ldots,w_{s}}=N\prod_{i=1}^{s}p_{i,w_{i}*} \eeq
The induction step boils down to
\beqq \ln\left(eN_{w_{1},\ldots,w_{s-1}}\right)
\ge \left(1-p_{s,w_{s}*}\right)+
\ln\left(eN_{w_{1},\ldots,w_{s-1}}p_{s,w_{s}*}\right) \eeqq
which is obviously true.
\end{proof}

Theorem (\ref{treeup}) is converted into
\begin{theorem}
Assume the independent data model.
Suppose a semi-lexicographic forest with a totally random coordinate values
order contains $T$ trees, its success probability is $S$, and
\beq N=\max\left(1,\frac{2n_{0}}{a}\right) \eeq
Than for any $0\le\lambda\le 1$
\berr \ln T\ge \lambda\ln N-\ln\left[2\ln\left(e^{4.5}N\right)\right]+
\qquad\qquad \\+
\ln\frac{S}{2}-\sqrt{\frac{4}{S}\sum_{i=1}^{d}
V(P_{i},\lambda)}-\sum_{i=1}^{d}F(P_{i},\lambda) \eerr
\end{theorem}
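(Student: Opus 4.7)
The plan is to re-run the argument of Theorem~\ref{treeup} essentially verbatim, substituting the single-tree semi-lexicographic bound just established for the bucketing single-tree bound used there. The extra factor $2\ln(e^{4.5}N)$ that appears in the semi-lexicographic single-tree bound is precisely what produces the additional $-\ln[2\ln(e^{4.5}N)]$ term in the conclusion.

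Concretely, I would pick auxiliary probabilities $q_{i,j}\ge 0$ with $\sum_{j=0}^{b}q_{i,j}=1$ (to be optimized at the end) and introduce the same test random variable
\[Z=\sum_{i=1}^{d}\ln\frac{p_{i,y_{i}}}{q_{i,y_{i}}}\]
as in the proof of Theorem~\ref{treeup}. For any real $z$ I would split
\[S\le {\rm Prob}\{Z\ge z\}+e^{z}S_{Q},\]
where $S_{Q}$ is the formal success-probability expression of the same semi-lexicographic forest but with the product measure $\prod_{i}q_{i,\cdot}$ in place of $\prod_{i}p_{i,\cdot}$. Applying the previous single-tree semi-lexicographic theorem to each of the $T$ trees, with $q$ substituted for $p$ only in the success weights while the bucket-size constraint still refers to the true marginals $p_{i,j*}$, gives
\[S_{Q}\le T\cdot\frac{2\ln(e^{4.5}N)}{N^{\lambda}}\prod_{i=1}^{d}\max\left(1,\sum_{j=0}^{b-1}\frac{q_{i,j}}{p_{i,j*}^{\lambda}}\right).\]

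Next I would handle the tail via the Chebyshev bound ${\rm Prob}\{Z\ge z\}\le {\rm Var}(Z)/(z-{\rm E}(Z))^{2}$, choosing $z={\rm E}(Z)+\sqrt{2\,{\rm Var}(Z)/S}$ exactly as in Theorem~\ref{treeup}; this yields $S\le 2e^{{\rm E}(Z)+\sqrt{2\,{\rm Var}(Z)/S}}S_{Q}$. Finally I would pick $q_{i,j}$ to be the minimizers in the definition~(\ref{up}) of $F(P_{i},\lambda)$, so that $\sum_{j}q_{i,j}/p_{i,j*}^{\lambda}\le 1$ (each $\max$ factor equals $1$), ${\rm E}(Z)=\sum_{i}F(P_{i},\lambda)$, and ${\rm Var}(Z)=\sum_{i}V(P_{i},\lambda)$. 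Taking logarithms of the combined inequality and solving for $\ln T$ produces the stated lower bound, with the constants lining up because the extra $2$ in front of $S_{Q}$ and the $2\ln(e^{4.5}N)$ factor combine into $\ln(S/2)-\ln[2\ln(e^{4.5}N)]$.

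The main obstacle I anticipate is justifying that the single-tree semi-lexicographic bound carries over to the $Q$-weighted expression $S_{Q}$, even though the previous theorem was stated for the true data distribution. The proof of that theorem is a purely algebraic induction in which the $p_{i,j}$'s appear only as numerators matched against bucket-size factors $p_{i,j*}$; no inequality beyond $\sum_{j}p_{i,j}\le 1$ and non-negativity is ever used, and the pruning step only consults the size factors $\prod p_{i,w_{i}*}$. Substituting $q_{i,j}$ for $p_{i,j}$ in those numerators while keeping $p_{i,j*}$ in the bucket-size constraint preserves the induction, so the same bound holds with $q$ replacing $p$ in the final $\max$. Once that substitution is granted, the rest is a direct transcription of the proof of Theorem~\ref{treeup}.
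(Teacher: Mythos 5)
Your proposal is correct and is precisely the conversion the paper intends — the paper itself gives no explicit proof beyond the remark that Theorem~\ref{treeup} ``is converted into'' this statement, and your transcription (split $S\le{\rm Prob}\{Z\ge z\}+e^{z}S_{Q}$, bound $S_{Q}$ tree-by-tree, Chebyshev with $z={\rm E}(Z)+\sqrt{2{\rm Var}(Z)/S}$, choose the $q_{i,j}$'s to be the $F$-minimizers) is the right one. The subtlety you flag and resolve is indeed the only nontrivial point: the semi-lexicographic single-tree bound's proof — the conditional lemma (which only involves the marginals $p_{i,j*}$ through the $1+{\rm Bernoulli}$ bucket-size argument) followed by an algebraic induction in which each $p_{i,j}$ occurs only as a non-negative numerator against $p_{i,j*}^{\lambda}$, with the pruning condition $N\prod_{i}p_{i,w_{i}*}\ge 1$ and the inductive step $\ln(eN_{w_{1},\ldots,w_{s-1}})\ge(1-p_{s,w_{s}*})+\ln(eN_{w_{1},\ldots,w_{s-1}}p_{s,w_{s}*})$ referring only to marginals — does extend verbatim when the state weights $p_{i,j}$ are replaced by any non-negative $q_{i,j}$ with $\sum_{j=0}^{b}q_{i,j}=1$, mirroring the remark after the bucketing single-tree theorem that $p_{i,j}\le p_{i,j*}$ is never used.
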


\section{A Lower Bound} \label{lower}

\begin{theorem} \label{law}
Assume the independent data model and denote
\beq N=\frac{2n_{0}}{a} \eeq
Let $\epsilon>0$ be some small parameter, and let
Let $\lambda,r_{1},r_{2},\ldots,r_{d}$ attain
\berr \min_{\lambda\ge 0}\max_{0\le r_{1},\ldots,r_{d}\le 1}\Bigg[
-(1+\epsilon)\lambda\ln N+\qquad\qquad\\ \qquad\qquad+
\sum_{i=1}^{d}\sum_{j=0}^{b}p_{i,j}\left(1-r_{i}+
r_{i}\frac{(j\ne b)}{p_{i,j*}^{\lambda}}\right)\Bigg] \eerr
The extrema conditions are
\beq \sum_{i=1}^{d}\sum_{j=0}^{b-1}p_{i,j}\frac
{-r_{i}\ln p_{i,j*}} {(1-r_{i})p_{i,j*}^{\lambda}+r_{i}}=(1+\epsilon)\ln N \eeq
and $r_{i}=0$ or
\beq \sum_{j=0}^{b-1}\frac{p_{i,j}}{(1-r_{i})p_{i,j*}^{\lambda}+r_{i}}=1
\qquad 1\le i\le d \eeq
Suppose that for some $\delta<1/7$
\berrr \sum_{i=1}^{d}\sum_{j=0}^{b}p_{i,j}\Bigg(
\ln[1-r_{i}+(j\ne b)r_{i}p_{i,j*}^{-\lambda}]-\qquad\qquad\qquad\\ \quad
-\sum_{k=0}^{b}p_{i,k}
\ln[1-r_{i}+(k\ne b)r_{i}p_{i,k*}^{-\lambda}]\Bigg)^{2}
\le \epsilon^{2}\delta\lambda^{2}\left(\ln N\right)^{2} \eerrr
\berrr \sum_{i=1}^{d}\sum_{j=0}^{b-1}p_{i,j}\Bigg(
\frac{-r_{i}\ln p_{i,j*}} {(1-r_{i})p_{i,j*}^{\lambda}+r_{i}}-
\qquad\qquad\qquad\qquad\\
-\sum_{k=0}^{b-1}p_{i,k}\frac{-r_{i}\ln p_{i,k*}}
{(1-r_{i})p_{i,k*}^{\lambda}+r_{i}}\Bigg)^{2}
\le \epsilon^{2}\delta\left(\ln N\right)^{2}/4 \eerrr
\beq \sum_{i=1}^{d}\sum_{j=0}^{b-1}p_{i,j}
\frac{r_{i}(1-r_{i})[\ln p_{i,j*}]^{2}}
{\left[(1-r_{i})p_{i,j*}^{\lambda}+r_{i}\right]^{2}}
\le \epsilon^{2}\delta\left(\ln N\right)^{2}/8 \eeq
Then the general algorithm with $T$ tries where
\berr \ln T\ge\ln\frac{1}{\delta}+(1+3\epsilon)\lambda\ln N-
\qquad\qquad\\ \qquad\qquad -
\sum_{i=1}^{d}\sum_{j=0}^{b}p_{i,j}\left(1-r_{i}+
r_{i}\frac{(j\ne b)}{p_{i,j*}^{\lambda}}\right) \eerr
has success probability
\beq S\ge 1-7\delta \eeq
Moreover there exists a bucketing forest with $T$ trees and at least
$1-7\delta$ success probability.
\end{theorem}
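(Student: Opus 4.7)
The plan is to lower-bound the success probability of a single try of the general algorithm, amplify by independence of the $T$ tries, and then convert each try's lexicographic output into a bucketing tree via the truncation of Section \ref{semiupper}. First I would fix one try and condition on its uniform randoms $r_{1},\ldots,r_{d}$. Call coordinate $i$ \emph{active} if the random exponent $\lambda_{i}$ of (\ref{imp}) is at most the $\lambda$ from the theorem's extremum. Monotonicity of (\ref{imp}) in $r_{i}$ makes activity equivalent to $r_{i}$ lying below a threshold $r_{i}^{\star}$ that solves (\ref{imp}) with $\lambda_{i}=\lambda$, and $r_{i}^{\star}$ is precisely the maximizer of the inner $\max$ in the theorem's definition of the extremum. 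Points that agree with the special pair on all active coordinates land in a bucket of expected size $n_{0}\prod_{i\ \mathrm{active}}p_{i,y_{i}*}$, whose log is pinned at $(1+\epsilon)\lambda\ln N$ in mean by the second extrema condition, matching the comparison budget $a$.

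Next I would compute the per-try success probability. After marginalizing over the agreement pattern $y$ and the $r_{i}$'s, the contribution from coordinate $i$ is exactly $\sum_{j=0}^{b}p_{i,j}[1-r_{i}+r_{i}(j\ne b)p_{i,j*}^{-\lambda}]$, which is the summand in the theorem's objective. Taking logs, the per-try log success probability decomposes into a sum of independent terms in $i$ whose total expectation at $r_{i}=r_{i}^{\star}$ equals $-\sum_{i}F(P_{i},\lambda)$ via the identification (\ref{low}). Hence the \emph{typical} single-try success probability is on the order of $N^{-(1+\epsilon)\lambda}\exp(-\sum_{i}F(P_{i},\lambda))$, matching the $T$ stated in the theorem.

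The three variance hypotheses are exactly the data needed to turn ``typical'' into ``with probability $\ge 1-\delta$'' via Chebyshev. The first controls the deviation of the per-try log success probability from its mean, the second controls the deviation of the log bucket size from its mean (so that $N$ is honest rather than just expected), and the third controls the second-order wobble under small perturbations of $r_{i}$ around $r_{i}^{\star}$. Each yields a good event of probability at least $1-\delta$; a union bound over the three events together with the standard $1-(1-p)^{T}$ amplification over the $T$ independent tries gives the $1-7\delta$ success probability. For the ``moreover'' clause I would fix a realization of the $r_{i}$'s in the intersection of the good events, apply the Section \ref{semiupper} truncation to each of the $T$ lexicographic trees to obtain an honest bucketing forest with leaf probabilities at most $1/N$, and absorb the truncation slack into the gap between $\lambda\ln N$ and $(1+3\epsilon)\lambda\ln N$.

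The main obstacle is that $\lambda_{i}$ depends nonlinearly on $r_{i}$, so the per-try log success probability is only approximately a sum of independent variables of the form Chebyshev expects. My fix is to linearize $\lambda_{i}$ about $r_{i}^{\star}$ in both the bucket-size and agreement-probability expansions, absorb the quadratic remainder into the third variance hypothesis, and apply Chebyshev coordinatewise. Tracking the three $\delta$-error budgets so that the union bound lands exactly at $7\delta$ is tedious but mechanical once the linearization is in place, and the convexity built into (\ref{imp}) and (\ref{low}) keeps all error signs under control.
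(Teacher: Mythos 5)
Your high-level scaffold (per-try lower bound, independence-based amplification, Chebyshev-controlled concentration, a ``moreover'' step) matches the paper, but you have missed the two structural ideas that actually make the argument go through, and your proposed ``fix'' would not replace them.

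First, the paper does not work with the continuous random exponents $\lambda_{i}$ at all. It fixes the parameter $\lambda$ from the min--max and replaces each $\lambda_{i}$ by the single Bernoulli bit $w_{i}=(\lambda_{i}\le\lambda)$, with $r_{i}={\rm Prob}\{w_{i}=1\}$ reinterpreted as a probability rather than a uniform draw. The per-try success event, conditioned on the agreement pattern $y$, is then $Z(y,w)=\sum_{i}\ln[1-w_{i}+w_{i}(y_{i}\ne b)p_{i,y_{i}*}^{-1}]\ge\ln N$, which is an exact sum of independent bounded terms. There is no nonlinearity of $\lambda_{i}$ in $r_{i}$ left to worry about, so the ``main obstacle'' you identify is an artifact of not thresholding, and your linearization of $\lambda_{i}$ around $r_{i}^{\star}$ is neither needed nor an adequate substitute --- it would introduce uncontrolled remainder terms precisely because $\lambda_{i}(r_{i})$ blows up near $r_{i}=1$. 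Second, the key probabilistic lower bound on ${\rm Prob}\{Z(y)\ge\ln N\}$ is not Chebyshev but a reverse Chernoff inequality stated as a lemma: tilt $Z(y)$ by $e^{\lambda Z(y)}$, use Chebyshev only to show the tilted variable concentrates near its tilted mean, and convert back via the density ratio. This is where $U(y)=\ln{\rm E}[e^{\lambda Z(y)}]$, $V(y)=\partial U/\partial\lambda$ and $W(y)=\partial^{2}U/\partial\lambda^{2}$ arise. Your plan never introduces this tilting, so it has no mechanism to produce the $e^{{\rm E}[U]-\lambda{\rm E}[V]}$ scale.

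Several smaller points follow from the above. The three variance hypotheses are Chebyshev controls (over the randomness of $y$ only) on $U(y)$, $V(y)$, and $W(y)$ respectively; the third has nothing to do with ``second-order wobble under perturbations of $r_{i}$'' --- it bounds ${\rm E}_{y}[W(y)]$, the tilted variance appearing in the reverse Chernoff slack. You also need to condition on $y$ before invoking independence of tries: the tries share the special pair, so only conditionally on $y$ are they i.i.d.\ and the $1-(1-P(y))^{T}$ amplification valid, after which one averages $\frac{TP(y)}{1+TP(y)}$ over $y$. You present the amplification before this conditioning. Finally, your sign on $F$ is reversed --- the typical single-try success probability is of order $N^{-(1+\epsilon)\lambda}\exp(+\sum_{i}F(P_{i},\lambda))$, not with a minus, since $F\ge 0$ and the tree count is $T\approx N^{\lambda}e^{-\sum F}$. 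The ``moreover'' clause does not require the Section~\ref{semiupper} truncation: the analysis already reasons about the bucket determined by $\{i:w_{i}=1\}$ with expected leaf occupancy at most $a/2$, which is directly a bucketing forest.
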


The alarmingly complicated small variance conditions are asymptotically
valid, because the variances grow linearly with $\ln N$.
However there is no guarantee that they can be always met.
Indeed the upper bound is of the Chernof inequality
large deviation type, and can be a poor estimate in pathological cases.
\begin{definition}
Let $Y,Z$ be joint random variables.
We denote by $Y_{Z}$ the conditional type random variable $Y$ with its
probability density multiplied by
\beq \frac{e^{Z}}{{\rm E}[e^{Z}]} \eeq
In the discrete case $Z,Y$ would have values $y_{i},z_{i}$ with
probability $p_{i}$. Then $Y_{Z}$ has values $y_{i}$ with probability
\beq \frac{p_{i}e^{z_{i}}}{\sum_{j}p_{j}e^{z_{j}}} \eeq
\end{definition}

\begin{lemma}
For any random variable $Z$, and $\lambda\ge 0$
\beq \ln{\rm Prob}\left\{Z\ge {\rm E}\left[Z_{\lambda Z}\right]\right\} \le
\ln{\rm E}\left[e^{\lambda Z}\right]-\lambda{\rm E}\left[Z_{\lambda Z}\right]
\eeq
\berr \ln{\rm Prob} \left\{Z\ge {\rm E}\left[Z_{\lambda Z}\right]-
\sqrt{2{\rm Var}\left[Z_{\lambda Z}\right]}\right\} \ge \\ \ge
\ln{\rm E}\left[e^{\lambda Z}\right]-
\lambda{\rm E}\left[Z_{\lambda Z}\right]-\ln 2-
\lambda\sqrt{2{\rm Var}\left[Z_{\lambda Z}\right]}  \eerr
\end{lemma}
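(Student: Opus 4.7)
The plan is to prove both inequalities via the standard exponentially tilted (Esscher) measure associated with $Z$. Write $M = {\rm E}[e^{\lambda Z}]$ and let $Q$ denote the law with Radon-Nikodym derivative $e^{\lambda Z}/M$ with respect to the original law $P$; by construction the distribution of $Z$ under $Q$ coincides with $Z_{\lambda Z}$, so ${\rm E}_Q[Z] = {\rm E}[Z_{\lambda Z}]$ and ${\rm Var}_Q[Z] = {\rm Var}[Z_{\lambda Z}]$. With this single change of measure both bounds fall out almost mechanically.

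The upper bound is pure Chernoff. For any $\lambda \ge 0$ and real $t$, the pointwise inequality $1_{Z\ge t}\le e^{\lambda(Z-t)}$ gives $\Pr\{Z\ge t\}\le e^{-\lambda t}M$. Taking $t = {\rm E}[Z_{\lambda Z}]$ and logging yields the first claim; there is no subtlety here.

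For the lower bound I would use the identity $\Pr(A) = M \cdot {\rm E}_Q\bigl[e^{-\lambda Z}\,1_A\bigr]$ with $A = \{Z \ge {\rm E}[Z_{\lambda Z}] - \sqrt{2{\rm Var}[Z_{\lambda Z}]}\}$, then restrict the integrand to the two-sided Chebyshev window $B = \{\,|Z - {\rm E}[Z_{\lambda Z}]| \le \sqrt{2{\rm Var}[Z_{\lambda Z}]}\,\}$. Chebyshev applied in $Q$ gives $Q(B^c) \le \tfrac12$, so $Q(A \cap B) = Q(B) \ge \tfrac12$. On $B$ we have the upper one-sided bound $Z \le {\rm E}[Z_{\lambda Z}] + \sqrt{2{\rm Var}[Z_{\lambda Z}]}$, hence $e^{-\lambda Z} \ge \exp\bigl(-\lambda{\rm E}[Z_{\lambda Z}] - \lambda\sqrt{2{\rm Var}[Z_{\lambda Z}]}\bigr)$. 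Multiplying the three factors $M$, the exponential lower bound, and $Q(A\cap B)\ge 1/2$, then taking logs, produces the second claim with its $\ln 2$ penalty.

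The only delicate point is the symmetric sandwich: the desired event is one-sided, but to pull the exponential factor out of the $Q$-expectation with a clean lower bound on $e^{-\lambda Z}$ we must intersect with the two-sided Chebyshev event, and then observe that this intersection equals $B$ itself (since $B\subset A$). That bookkeeping is what produces the exact constants $\sqrt{2}$ and $\ln 2$ appearing in the statement; everything else is routine. No calculation beyond Chebyshev and a change of variable is needed.
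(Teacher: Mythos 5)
Your proof is correct and is essentially the paper's argument in slightly more explicit change-of-measure language: Chernoff for the upper bound, and for the lower bound Chebyshev under the exponentially tilted law $Q$ together with the pointwise bound $e^{-\lambda Z}\ge e^{-\lambda{\rm E}[Z_{\lambda Z}]-\lambda\sqrt{2{\rm Var}[Z_{\lambda Z}]}}$ on the Chebyshev window $B\subset A$. The paper phrases the same step as a bound on the density ratio $e^{\lambda Z}/{\rm E}[e^{\lambda Z}]$ over $B$, but the decomposition, constants, and bookkeeping are identical.
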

\begin{proof}
The upper bound is the Chernof bound.
The lower bound combines the Chebyshev inequality
\beq {\rm Prob}\left\{\left|Z_{\lambda Z}-{\rm E}[Z_{\lambda Z}] \right| \le 
\sqrt{2{\rm Var}[Z_{\lambda Z}]}\right\}\ge \frac{1}{2} \eeq
with the fact that the condition in the curly bracket bounds the 
densities ratio:
\berr \ln\frac{e^{\lambda Z}}{{\rm E}\left[e^{\lambda Z}\right]}=
\ln\frac{e^{\lambda Z_{\lambda Z}}}{{\rm E}\left[e^{\lambda Z}\right]}\le\\ \le
-\ln{\rm E}\left[e^{\lambda Z}\right]+\lambda{\rm E}\left[Z_{\lambda Z}\right]+
\lambda\sqrt{2{\rm Var}\left[Z_{\lambda Z}\right]}  \eerr
\end{proof}

It is amusing, and sometimes useful to note that
\beq {\rm E}[Z_{\lambda Z}]=\frac{\partial\ln{\rm E}\left[e^{\lambda Z}\right]}
{\partial\lambda} \eeq
\beq {\rm Var}[Z_{\lambda Z}]=\frac{\partial^{2}\ln{\rm E}[e^{\lambda Z}]}
{\partial\lambda^{2}} \eeq
We will now prove the theorem \ref{law}.
\begin{proof}
Let $\lambda\ge 0$ be a parameter to be optimized.
Let $w\in\{0,1\}^{d}$ be the random Bernoulli vector
\beq w_{i}=(\lambda_{i}\le\lambda) \eeq
where $\lambda_{i}$ is the $i$'th random exponent.
In a slight abuse of notation let $0\le r_{i}\le 1$ denote not a random
variable but a probability
\beq r_{i}={\rm Prob}\{w_{i}==1\}={\rm Prob}\{\lambda_{i}\le\lambda\} \eeq
We could not resist doing that because
equation (\ref{imp}) is still valid under this interpretation.
Another point of view is to forget (\ref{imp}) and consider $r_{i}$
a parameter to be optimized.
Again let $y\in\{0,1,\ldots,b\}^{d}$ denote the abbreviated state
of the special points $x_{0},x_{1}$. Let us consider a single try of our
algorithm, conditioned on both $y$ and $w$. The following requirements
\beq \prod_{i=1}^{d}(1-w_{i}+w_{i}(y_{i}\ne b)) = 1 \label{eq} \eeq
\beq \prod_{i=1}^{d}(1-w_{i}+w_{i}p_{i,y_{i}*}) \le \frac{1}{N} =
\frac{a}{2n_{0}} \label{ineq} \eeq
state that the expected number of $X_{0}$ points in the bucket defined
by the coordinates whose $w_{i}=1$ with value $y_{i}$ is at most $a/2$.
Then the probability that the actual number of bucket points is more than
$a$ is bounded from above by $1/2$. A more compact way of stating
(\ref{eq}) and (\ref{ineq}) together is
\beq Z(y,w)\ge\ln N \eeq
\beq Z(y,w)=\sum_{i=1}^{d}\ln\left[1-w_{i}+w_{i}(y_{i}\ne b)p_{i,y_{i}*}^{-1}
\right] \eeq
Summing over $w$ gives success probability
of a single try, conditioned over $y$ to be at least
\beqq P(y)\ge\frac{1}{2}\sum_{w\in\{0,1\}^{d}}\prod_{i=1}^{d}
[(1-w_{i})(1-r_{i})+w_{i}r_{i}][Z(y,w)\ge\ln N] \eeqq
In short
\beq P(y)\ge\frac{1}{2}{\rm Prob}\left\{Z(y)\ge\ln N\right\}\label{sho}\eeq
Conditioning over $y$ makes tries independent of each other, hence the
conditional success probability of at least $T$ tries is at least
\beq S(y)\ge 1-\left(1-P(y)\right)^{T}\ge\frac{TP(y)}{1+TP(y)} \eeq
Averaging over $y$ bounds the success probability $S$ of the algorithm by
\beq S\ge \sum_{y\in\{0,1,\ldots,b\}^{d}}\prod_{i=1}^{d}p_{i,y_{i}}
\cdot\left[\frac{TP(y)}{1+TP(y)}\right] \eeq
In short
\beq S\ge{\rm E}\left[\frac{TP(y)}{1+TP(y)}\right] \label{short}\eeq

Now we must get our hands dirty. The reverse Chernof inequality is
\berrr \ln{\rm Prob} \bigg\{Z(y)\ge {\rm E}\left[Z(y)_{\lambda Z(y)}\right]-
\sqrt{2{\rm Var}\left[Z(y)_{\lambda Z(y)}\right]}\bigg\} \ge \\ \ge
\ln{\rm E}\left[e^{\lambda Z(y)}\right]-
\lambda{\rm E}\left[Z(y)_{\lambda Z(y)}\right]-\ln 2-\\-
\lambda\sqrt{2{\rm Var}\left[Z(y)_{\lambda Z(y)}\right]}  \eerrr
Denoting
\beqq U(y)=\ln{\rm E}\left[e^{\lambda Z(y)}\right]=
\sum_{i=1}^{d}\ln[1-r_{i}+(y_{i}\ne b)r_{i}p_{i,y_{i}*}^{-\lambda}] \eeqq
\berr V(y)=\frac{\partial U(y)}{\partial\lambda}=
{\rm E}\left[Z(y)_{\lambda Z(y)}\right]=\\=
\sum_{\footnotesize\barr{cc}1\le i\le d\\ y_{i}\ne b \earr}\frac
{-r_{i}\ln p_{i,y_{i}*}} {(1-r_{i})p_{i,y_{i}*}^{\lambda}+r_{i}}
\label{de}\eerr
\berr W(y)=\frac{\partial^{2} U(y)}{\partial\lambda^{2}}=
{\rm Var}\left[Z(y)_{\lambda Z(y)}\right]=\\=
\sum_{\footnotesize\barr{cc}1\le i\le d\\ y_{i}\ne b \earr}\frac
{r_{i}(1-r_{i})[\ln p_{i,y_{i}*}]^{2}}
{\left[(1-r_{i})p_{i,y_{i}*}^{\lambda}+r_{i}\right]^{2}} \eerr
the reverse Chernof inequality can be rewritten as
\berr \ln{\rm Prob} \left\{Z(y)\ge V(y)-\sqrt{2W(y)}\right\} \ge\\ \ge
U(y)-\lambda V(y)-\ln 2-\lambda\sqrt{2W(y)} \eerr
It is time for the second inequality tier. For any $\delta<1/3$
\berr {\rm Prob} \Big\{ |U(y)-{\rm E}[U]| \le \sqrt{{\rm Var}[U]/\delta},\\
|V(y)-{\rm E}[V]| \le \sqrt{{\rm Var}[V]/\delta},\\ 
W(y)\le {\rm E}[W]/\delta \Big\} \ge 1-3\delta \eerr
where
\beq {\rm E}[U]=\sum_{i=1}^{d}\sum_{j=0}^{b}
p_{i,j}\ln[1-r_{i}+(j\ne b)r_{i}p_{i,j*}^{-\lambda}] \eeq
\beq {\rm E}[V]=\sum_{i=1}^{d}\sum_{j=0}^{b-1}p_{i,j}\frac
{-r_{i}\ln p_{i,j*}} {(1-r_{i})p_{i,j*}^{\lambda}+r_{i}} \eeq
Hence
\berrr \ln{\rm Prob} \bigg\{Z(y)\ge {\rm E}[V]-\sqrt{{\rm Var}[V]/\delta}-
\sqrt{2{\rm E}[W]}/\delta\bigg\} \ge \\ \ge
{\rm E}[U]-\lambda {\rm E}[V]-\ln 2 -
\sqrt{{\rm Var}[U]/\delta}-\\-\lambda\sqrt{{\rm Var}[V]/\delta}-
\lambda\sqrt{2{\rm E}[W]/\delta} \eerrr
Now we have to pull all strings together. 
In order to connect with (\ref{sho}) we will require
\beq {\rm E}[V]=(1+\epsilon)\ln N \label{cond}\eeq
\beq \sqrt{{\rm Var}[V]}+\sqrt{2{\rm E}[W]}\le
\epsilon \delta^{1/2}\ln N \label{can}\eeq
for some small $\epsilon>0$. Recalling (\ref{de}), condition (\ref{cond})
is achieved by choosing $\lambda$ to attain
\beq \min_{\lambda\ge 0}\left[-(1+\epsilon)\lambda\ln N+{\rm E}[U]\right] \eeq
If (\ref{can}) holds, then
\beq \ln P(y)\ge -(1+2\epsilon)\lambda\ln N+{\rm E}[U]-\ln 4-
\sqrt{{\rm Var}[U]/\delta} \eeq
with probability at least $1-3\delta$.
Recalling (\ref{short}) the success probability is at least
\beq S\ge\frac{1-3\delta}{1+4e^{(1+2\epsilon)\lambda\ln N-{\rm E}[U]+
\sqrt{{\rm Var}[U]/\delta}}/T} \eeq
\end{proof}

\section{Conclusion}

To sum up, we present three things:
\begin{enumerate}
\item
An approximate nearest neighbor algorithm (\ref{imp}), and
its sparse approximation (\ref{cons}).
\item
An information style performance estimate (\ref{cut}).
\item
A warning against dimensional reduction of sparse data, see section
\ref{downside}.
\end{enumerate}





\begin{thebibliography}{1}

\bibitem{NA}
N.~Alon
Private Communication.

\bibitem{AI06}
A.~Andoni, P.~Indyk
{\em Near-Optimal Hashing Algorithms for Approximate Nearest Neighbor in
High Dimensions}
FOCS 2006.

\bibitem{Bro98}
A.~Broder.
{\em Identifying and Filtering Near-Duplicate Documents}
Proc. FUN, 1998.

\bibitem{GSZ01}
C.Gennaro, P.Savino and P.Zezula
{\em Similarity Search in Metric Databases through Hashing}
Proc. ACM workshop on multimedia, 2001.

\bibitem{IM98}
P.~Indyk and R.~Motwani.
{\em Approximate Nearest Neighbor: Towards Removing the Curse
of Dimensionality}
Proc. 30th Annu. ACM Sympos. Theory Comput., 1998.

\bibitem{KWZ95}
R.M. Karp, O. Waarts, and G. Zweig.
{\em The Bit Vector Intersection Problem}
Proc. 36th Annu. IEEE Sympos. Foundations of Computer Science,
pp. 621-630, 1995.

\bibitem{MD2}
{\em Bucketing Information and the Statistical High Dimensional
Nearest Neighbor Problem}
To be Published.

\end{thebibliography}
%

\end{document}